\documentclass[conference]{IEEEtran}

\usepackage{mathtools}
\usepackage{amssymb}
\usepackage{amsmath}
\usepackage{algorithm}
\usepackage{algpseudocode}
\usepackage{enumerate,multirow}
\usepackage{algpseudocode}
\usepackage{tikz}
\usepackage{tikz}
\usetikzlibrary{patterns,positioning,arrows,calc}
\usepackage{siunitx}
\usepackage{multirow}
\usepackage{multicol}
\usepackage{color}
\usepackage{xcolor}
\usepackage{algorithm}
\usepackage{algpseudocode}
\usepackage{graphicx}
\usepackage{subcaption}
\algrenewcommand\algorithmicrequire{\textbf{Input:}}
\algrenewcommand\algorithmicensure{\textbf{Output:}}
\usepackage{comment}

\usepackage{geometry}
\usepackage{tikz}
\usepackage{tikz}
\usetikzlibrary{patterns,positioning,arrows,calc}\usepackage{graphicx}
\usepackage{graphicx}
\usepackage{tikz}
\usetikzlibrary{trees}
\date{}

\usepackage{calc}
\newcolumntype{M}[1]{>{\centering\arraybackslash}m{#1}}
\newcolumntype{N}{@{}m{0pt}@{}}

\usepackage{mdwmath}
\usepackage{blindtext}
\usepackage{eqparbox}
\usepackage{stfloats}
\usepackage{amsthm}
\usepackage[numbers,sort&compress]{natbib}
\usepackage{algorithm}
\usepackage{algpseudocode}
\usepackage{tikz}
\usetikzlibrary{patterns, arrows.meta, positioning}
\IEEEoverridecommandlockouts
\title{\Huge
{Univariate Bicycle Quantum LDPC Codes:\\ Explicit Logical Structure and Distance Bounds}}
\author{\IEEEauthorblockN{Sheida Rabeti and Hessam Mahdavifar} 
\IEEEauthorblockA{Department of Electrical and Computer Engineering, Northeastern University, Boston, MA 02115, USA \\ 
Email: \{rabeti.s, h.mahdavifar\}@northeastern.edu}
\thanks{This work was supported in part by NSF under Grant CCF-2415440.}
}

\theoremstyle{plain}
\newtheorem{remark}{{Remark}}
\newtheorem{example}{{Example}}
\newtheorem{theorem}{{Theorem}}
\newtheorem{lemma}[theorem]{{Lemma}}
\newtheorem{proposition}[theorem]{{Proposition}}
\newtheorem{corollary}[theorem]{{Corollary}}

\newtheorem{definition}{{Definition}}

\newcommand{\cC}{{\cal C}}

\newcommand{\cP}{{\cal P}}
 
\newcommand{\cR}{{\cal R}}
\newcommand{\cS}{{\cal S}}

\DeclareMathAlphabet{\mathbfsl}{OT1}{ppl}{b}{it} 

\newcommand{\floor}[1]{\left\lfloor #1 \right\rfloor}

\newcommand{\be}[1]{\begin{equation}\label{#1}}
\newcommand{\ee}{\end{equation}}

\renewcommand{\le}{\leqslant} 
\renewcommand{\leq}{\leqslant}
\renewcommand{\ge}{\geqslant} 
\renewcommand{\geq}{\geqslant}

\newcommand{\Cref}[1]{Co\-ro\-lla\-ry\,\ref{#1}}

\DeclareMathOperator{\wt}{wt}

\DeclareMathOperator{\supp}{supp}

\newcommand{\rs}{\mathrm{rs}}
\usepackage{hvlogos}
\usepackage{qtree}
\usepackage{tikz}
\usepackage{multirow}
\usepackage{hhline}
\usepackage{hyperref}
\usepackage{pgfplots}
\pgfplotsset{compat=1.18}
\usepackage{xcolor}

\definecolor{colorPair1}{HTML}{1F77B4}
\definecolor{colorPair2}{HTML}{D62728}
\definecolor{colorPair3}{HTML}{2CA02C}

\begin{document}

\vspace{10mm}
\maketitle

\begin{abstract}
We introduce univariate bicycle (UB) codes, a structured subclass of generalized bicycle (GB) quantum low-density parity-check (LDPC) codes obtained via a Frobenius relation. This construction reduces the code design space from a two-polynomial search in GB codes to a single-polynomial search, while preserving sparsity. We provide an explicit algebraic characterization of the logical coset spaces by constructing a basis for the logical quotient space, yielding a complete parametrization of logical operators. Leveraging this structure, we derive upper bounds on the minimum distance by relating structured logical representatives to cycle-density properties of associated circulant matrices. Finally, simulation results for short- to medium-length UB codes (block lengths ranging from a few hundred to approximately $10^3$) demonstrate competitive performance relative to existing GB and bivariate bicycle (BB) codes despite the additional algebraic restriction.
\end{abstract}

\section{Introduction}
Quantum error-correcting codes are essential for fault-tolerant quantum computation. Among existing constructions, quantum low-density parity-check (QLDPC) codes have received significant attention in recent years due to their sparse parity-check matrices, which enable scalable implementations with bounded-weight stabilizers. Building on classical LDPC codes introduced by Gallager \cite{gallager1962low}, recent advances in QLDPC code design have focused on developing structured constructions that balance sparsity, distance, and implementability \cite{7336474,Breuckmann_2021,vasic2025quantumlowdensityparitycheckcodes}. In particular, the design of algebraically structured code families plays a key role in enabling tractable analysis and efficient exploration of the code space.

Within this class, generalized bicycle (GB) codes \cite{PhysRevA.88.012311, Panteleev_2021}, which extend the original bicycle construction \cite{MacKay_2004}, form a flexible family of Calderbank–Shor–Steane (CSS) codes \cite{PhysRevA.54.1098, PhysRevA.54.4741} with desirable structural and performance properties. An important subclass is given by bivariate bicycle (BB) codes, which have recently attracted significant attention due to their performance and compatibility with near-term quantum hardware \cite{Bravyi_2024}. Coprime BB codes further refine this framework by imposing additional algebraic constraints on the defining polynomials, leading to improved code parameters \cite{wang2024coprime, postema2025existencecharacterisationbivariatebicycle}. However, these constructions typically require searching over pairs of polynomials, resulting in a large design space and substantial computational overhead for code search.

Motivated by the need for more structured and tractable constructions, we introduce a new subclass called \emph{univariate bicycle (UB) codes}, in which the defining polynomials are coupled through a Frobenius relation. This restriction reduces the design space from a two-polynomial search in GB/BB constructions to a single-polynomial search, while preserving the LDPC structure. More importantly, this structure enables an explicit algebraic characterization of the logical operators. Although general methods \cite{gottesman1997stabilizercodesquantumerror, Wilde_2009} provide procedures for determining logical operators, they typically do not result in compact or structured representations. In contrast, we construct an explicit basis for the logical coset spaces, yielding a complete parameterization of the logical operators of UB codes.

Leveraging this explicit structure, we derive upper bounds on the minimum distance by restricting attention to structured families of logical representatives and relating their weights to cycle-density properties of the associated circulant matrices. This is particularly significant since computing the minimum distance of stabilizer codes is, in general, NP-hard \cite{10154053,641542,7891006}. Beyond distance analysis, the explicit logical basis also provides a basis for systematic logical gate design, which is often difficult for general QLDPC codes. Finally, we present simulation results for short- to medium-length UB codes under standard decoding algorithms, demonstrating their competitive performance and highlighting the potential of the proposed construction in practical settings.

\section{Preliminaries}
\subsection{Quantum Stabilizer and CSS Codes}
An $[[N, k, d]]$ quantum stabilizer code is a $2^k$-dimensional subspace $\mathcal{C} \subseteq (\mathbb{C}_2)^{\otimes N}$ with an Abelian stabilizer group 
$\mathcal{S}$:
\begin{equation} \label{eq:StabilizerCode} \mathcal{C} = \{|\psi\rangle \in (\mathbb{C}_2)^{\otimes N} \colon s |\psi\rangle= |\psi\rangle, \forall s \in \mathcal{S} \}. \end{equation}
Each generator $g \in \mathcal{S}$ acts as a parity-check constraint. The \textit{minimum distance} $d$ of a stabilizer code is the minimum weight of some Pauli operator $P \in \cP_N$, where $\cP_N$ is the $N$-qubit Pauli group, commuting with elements in $\cS$ such that $P \notin \cS$. A Calderbank--Shor--Steane (CSS) code is a stabilizer code with a parity-check matrix of the form
$
H = \begin{bmatrix} H_X & 0 \\ 0 & H_Z \end{bmatrix},
$
where $H_X,H_Z$ are classical binary parity-check matrices satisfying $H_X H_Z^T = 0$. Such codes can correct Pauli-$X$ and Pauli-$Z$ errors independently using $H_Z$ and $H_X$, respectively. In this work, we focus on QLDPC CSS codes, where both $H_X$ and $H_Z$ are sparse.
For CSS codes, the minimum distance admits an equivalent characterization in terms of the underlying classical codes. Specifically, the $Z$-distance is defined as
\[
d_Z = \min \{ \mathrm{wt}(v) : v \in \ker(H_Z)\setminus \operatorname{rs}(H_X) \},
\]
and the $X$-distance is also defined in the same fashion. 
The minimum distance of the CSS code is given by $d=\min(d_X,d_Z)$. Vectors in $\ker(H_Z)$ correspond to Pauli-$Z$ operators that commute with all stabilizers, while vectors in $\operatorname{rs}(H_X)$ correspond to stabilizer generators themselves. Hence, elements of $\ker(H_Z)\setminus \operatorname{rs}(H_X)$ represent nontrivial logical $Z$ operators. Similarly, $\ker(H_X)\setminus \operatorname{rs}(H_Z)$ characterizes nontrivial logical $X$ operators. Throughout, we use $\ker(\cdot)$ and $\operatorname{rs}(\cdot)$ to denote the kernel (null-space) and row space of a matrix, respectively.
\subsection{Generalized Bicycle (GB) Codes}
A code is called \emph{cyclic} if it is closed under cyclic shifts. For any cyclic code $\cC$, we can associate a one-to-one mapping between $\mathbb{F}^n$ and $\cR_n \triangleq \mathbb{F}[x] / (x^n-1)$ by mapping $c = (c_{0}, c_1, \ldots, c_{n-1}) \in \mathbb{F}^n$ to $c(x) \triangleq c_0 + c_1x + ... + c_{n-1}x^{n-1}$.
Then every cyclic code forms an ideal in $\mathcal{R}_n$ generated by a monic polynomial 
$g(x)$ where $g(x)\,|\,x^n - 1$, with check polynomial $h(x) = (x^n - 1)/g(x)$. 
Let $P$ be the $n \times n$ cyclic permutation matrix. Then, both generator and parity-check matrices can be expressed as circulant matrices where a circulant matrix $A$ corresponding to the polynomial $a(x) = a_0 + a_1x+ ... + a_{n-1}x^{n-1}$ is defined as $A = \mathrm{Circ}(a(x))$.

Generalized bicycle (GB) codes \cite{MacKay_2004,PhysRevA.88.012311}
are CSS codes constructed from two circulant matrices
$A=\mathrm{Circ}(a(x))$ and $B=\mathrm{Circ}(b(x))$, where
$a(x),b(x)\in\mathbb F_2[x]$. The corresponding parity-check matrices are
$
H_X=[A,B],\ 
H_Z=[B^T,A^T].
$
Note that \(B^T\) and \(A^T\) are generated, up to a cyclic shift, by the reciprocal polynomials
$b^*(x)=x^{\deg(b)}b(x^{-1})$ and
$a^*(x)=x^{\deg(a)}a(x^{-1})$,
respectively. The CSS orthogonality condition holds since
$
H_X H_Z^{\mathsf T}=AB+BA=0,
$
where circulant matrices commute. Moreover, $H_X$ and $H_Z$
are permutation-equivalent, implying that the corresponding
classical codes have identical distances. Hence,
$
d_X=d_Z=d.
$

\begin{proposition}\cite{Panteleev_2021}
\label{prop1}
The dimension $k$ of the generalized bicycle code $[[N = 2n, k, d]]$ defined by 
$a(x), b(x) \in \mathbb{F}_2[x]$ is given by
$
    k = 2 \deg \tilde{h}(x),
$
where $
\tilde{h}(x) \triangleq \gcd\big(a(x), b(x), x^n - 1\big).
$
\end{proposition}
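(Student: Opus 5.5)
The plan is to compute $k$ directly from ranks. For a CSS code on $N=2n$ qubits we have $k = N - \operatorname{rank}(H_X) - \operatorname{rank}(H_Z)$. Since $H_X=[A,B]$ and $H_Z=[B^T,A^T]$ are permutation-equivalent up to transposition of the blocks, $\operatorname{rank}(H_Z)=\operatorname{rank}([B,A]^T)$. This rank equals $\operatorname{rank}([A,B])$, because transposing each circulant block corresponds to the ring automorphism $x\mapsto x^{-1}$ of $\cR_n$, which preserves ideal dimensions. So $k = 2(n-\operatorname{rank} H_X)$, and the whole task reduces to showing $\operatorname{rank}[A,B] = n-\deg\tilde h(x)$.

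To compute $\operatorname{rank}[A,B]$, I will identify the column space of $[A,B]$ with a subspace of $\cR_n$. The column space of $\mathrm{Circ}(a(x))$ is, up to the standard identification (possibly after reversing coordinates, which does not affect dimensions), the ideal $\langle a(x)\rangle \subseteq \cR_n$. Hence the column space of $[A,B]$ is the sum of ideals $\langle a(x)\rangle + \langle b(x)\rangle = \langle a(x), b(x)\rangle$.

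Next I will use that $\cR_n = \mathbb F_2[x]/(x^n-1)$ is a quotient of a PID. Every ideal of $\cR_n$ is therefore principal and generated by a divisor of $x^n-1$; in particular $\langle a,b\rangle = \langle g\rangle$ with $g=\gcd(a,b,x^n-1)=\tilde h$. This follows from Bézout in $\mathbb F_2[x]$: $\tilde h = ua+vb+w(x^n-1)$, and conversely $\tilde h$ divides both $a$ and $b$. The ideal $\langle g\rangle$ with $g\mid x^n-1$ is the cyclic code with generator $g$, of dimension $n-\deg g$. Therefore $\operatorname{rank}H_X = n-\deg\tilde h$, and $k = 2n - 2(n-\deg\tilde h) = 2\deg\tilde h$.

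The main obstacle is bookkeeping rather than substance: the identification of the column space (or row space) of $\mathrm{Circ}(a)$ with $\langle a\rangle$ depends on the circulant convention, so the transposes in $H_Z$ must be checked to give the same rank. Transposing sends $a(x)\mapsto a(x^{-1})$, and $\gcd(a(x^{-1}),b(x^{-1}),x^n-1)$ is, up to a unit $x^j$, the reciprocal of $\tilde h$, which has the same degree, since $\tilde h(0)\neq 0$ as $\tilde h$ divides $x^n-1$. This confirms $\operatorname{rank}H_Z=\operatorname{rank}H_X$ without relying on permutation equivalence.
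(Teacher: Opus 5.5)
Your proof is correct and is the standard argument for this result: the paper gives no proof of its own here (it imports the result from Panteleev--Kalachev), and writing $k=2n-2\operatorname{rank}[A,B]$, then identifying the column space of $[A,B]$ with the ideal $\langle a,b\rangle=\langle\tilde h\rangle\subseteq\mathcal R_n$ of dimension $n-\deg\tilde h$, is exactly how it is established. The only blemish is notational: $H_Z=[B^{T},A^{T}]$ is the blockwise transpose of $[B,A]$, not the full transpose $[B,A]^{T}$ (whose rank trivially equals that of $[A,B]$), but your argument via the automorphism $x\mapsto x^{-1}$ treats the actual matrix $H_Z$ correctly.
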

Let $\mathrm{GB}(a,b)$ denote the GB code defined by the circulant matrices corresponding to the polynomials $a(x), b(x) \in \mathbb{F}_2[x]$.

\section{Univariate Bicycle (UB) Code}

In this section, we introduce univariate bicycle (UB) codes that form a structured subclass of generalized bicycle (GB) codes. In particular, they are obtained by restricting the two defining polynomials in GB codes to a single generator.

Let $\mathcal{R}_n \triangleq \mathbb{F}_2[x]/(x^n - 1)$, and let $a(x) \in \mathbb{F}_2[x]$ be a polynomial such that $r = \deg\big(\gcd(a(x), x^n - 1)\big) > 0$.
For an integer $\ell \geq 1$, define
\begin{equation}
b(x) \triangleq a(x)^{2^\ell} \;\; \text{in } \mathcal{R}_n.
\end{equation}
Let $A = \mathrm{Circ}(a(x))$ and $B = \mathrm{Circ}(b(x))$ denote the corresponding circulant matrices. The resulting CSS code $[[N = 2n, k, d]]$ with parity-check matrices
\begin{equation}
H_X = [A,\, B], \qquad H_Z = [B^{\mathsf{T}},\, A^{\mathsf{T}}],
\end{equation}
is called a \emph{univariate bicycle (UB) code}, denoted by $\mathrm{UB}(a(x), \ell)$. Equivalently, this construction can be expressed as
$\mathrm{UB}(a(x), \ell) = \mathrm{GB}\big(a(x),\, a(x)^{t}\big)$ with $t = 2^\ell$.

This construction satisfies the CSS orthogonality condition $H_X H_Z^{\mathsf{T}} = 0$, since circulant matrices commute. The following lemma describes how the Frobenius relation
$b(x)=a(x)^{2^\ell}$ affects the support and Hamming weight in
$\mathcal R_n$. 

\begin{lemma}
\label{weight-lemma}
Let $a(x)=\sum_{i=0}^{n-1} a_i x^i \in \mathcal R_n$ and let
$b(x)=a(x)^{2^\ell}$ for some $\ell \ge 1$. Then
$
    b(x)=a(x^{2^\ell})=\sum_{i=0}^{n-1} a_i x^{2^\ell i}
    \quad \text{in } \mathcal R_n .
$
Consequently, $\wt(b)\le \wt(a)$. Moreover, equality holds if
$\gcd(n,2^\ell)=1$; in particular, equality holds for odd $n$.
\end{lemma}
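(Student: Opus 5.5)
The plan is to prove the identity first and then read off the weight statements from it. The identity comes from the Frobenius endomorphism of the characteristic-two ring $\mathcal R_n$. In any commutative ring of characteristic $2$ we have $(u+v)^2=u^2+v^2$, because the cross term $2uv$ vanishes. Together with $a_i^2=a_i$ for $a_i\in\mathbb F_2$, this gives $a(x)^2=\sum_i a_i x^{2i}$. Since reduction modulo $x^n-1$ is a ring homomorphism from $\mathbb F_2[x]$, this squaring identity holds in $\mathcal R_n$. Iterating it $\ell$ times, by an immediate induction on $\ell$, yields
\[
b(x)=a(x)^{2^\ell}=\sum_{i=0}^{n-1} a_i x^{2^\ell i}=a(x^{2^\ell}) \quad \text{in } \mathcal R_n .
\]

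Next we translate this into a statement about supports. In $\mathcal R_n$ the monomial $x^{2^\ell i}$ equals $x^{\sigma(i)}$, where $\sigma(i)\equiv 2^\ell i \pmod n$ defines a map $\sigma:\mathbb Z_n\to\mathbb Z_n$. Hence
\[
b(x)=\sum_{i\in\supp(a)} x^{\sigma(i)}.
\]
The coefficient of $x^j$ in $b$ is the parity of $|\sigma^{-1}(j)\cap\supp(a)|$. So $\supp(b)\subseteq\sigma(\supp(a))$, and therefore $\wt(b)\le|\sigma(\supp(a))|\le\wt(a)$. This step needs care with cancellations: when $\sigma$ is not injective, two support indices may collide and cancel. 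That is harmless for the inequality, but it is exactly why equality can fail in general.

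For the equality case, assume $\gcd(n,2^\ell)=1$, which holds in particular whenever $n$ is odd. Then $2^\ell$ is a unit modulo $n$, so $\sigma$ is a bijection of $\mathbb Z_n$. Consequently no two distinct indices of $\supp(a)$ map to the same exponent, there are no cancellations, and $\supp(b)=\sigma(\supp(a))$ has exactly $\wt(a)$ elements.

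None of these steps is hard. The only point requiring attention is the middle one: the bound $\wt(b)\le\wt(a)$ must allow for coefficients cancelling modulo $2$ when $\sigma$ is not injective. That is handled by bounding $\supp(b)$ by the image $\sigma(\supp(a))$ rather than claiming equality there.
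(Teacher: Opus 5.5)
Your proof is correct and follows the same route as the paper: the Frobenius identity in characteristic $2$, then the exponent map $i\mapsto 2^\ell i \pmod n$, which can merge and cancel support positions in general but is a permutation of $\mathbb Z_n$ when $\gcd(n,2^\ell)=1$. Your bound $\supp(b)\subseteq\sigma(\supp(a))$, via the parity of preimages, makes precise the paper's brief remark that collisions cannot increase the weight.
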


\begin{proof}
The identity follows from the Frobenius map over $\mathbb F_2$.
In $\mathcal R_n$, exponents are reduced modulo $n$. Thus, distinct
nonzero terms of $a(x)$ may collide after the map
$i \mapsto 2^\ell i \pmod n$, implying the Hamming weight cannot increase.
If $\gcd(n,2^\ell)=1$, this map is a permutation of $\mathbb Z_n$ and hence, no collisions occur and the weight is preserved.
\end{proof}

Compared to general GB codes, where $a(x)$ and $b(x)$ are chosen
independently, the UB construction reduces the search space by setting
$b(x)$ as a function of $a(x)$ while preserving the low-weight structure. In
particular, if $\widetilde h(x)=\gcd(a(x),x^n-1)$ with
$\deg(\widetilde h)=r$, then Proposition \ref{prop1} gives
$k=2r$.


\section{Logical Coset Structure and Distance Bounds for UB Codes}
\label{sec:construct}

We now derive explicit algebraic descriptions of the logical quotient
spaces of UB codes and use them to obtain structured distance bounds.
In this section, we restrict attention to the divisor case
$a(x)\mid(x^n-1)$, which enables tractable ideal-theoretic descriptions of the logical operators through the induced annihilator structure. Now, consider the
$\mathrm{UB}(a(x),\ell)$ code with
\vspace{-2mm}
\[
t\triangleq 2^\ell,
\qquad
h(x)\triangleq \frac{x^n-1}{a(x)},
\qquad
r\triangleq \deg(a(x)).
\]

Define the logical quotient spaces
\[
\mathcal L_Z
\triangleq
\ker(H_Z)/\rs(H_X),
\qquad
\mathcal L_X
\triangleq
\ker(H_X)/\rs(H_Z).
\]
For $\mathbf z\in\ker(H_Z)$ and $\mathbf x\in\ker(H_X)$, denote the
corresponding cosets in $\mathcal L_Z$ and $\mathcal L_X$ by
$[\mathbf z]\triangleq \mathbf z+\rs(H_X)$ and
$[\mathbf x]\triangleq \mathbf x+\rs(H_Z)$, respectively.

\begin{theorem}[Logical basis of UB codes]
\label{thm:logical-basis-ub}
Let $a^*$, $b^*$, and $h^*$ denote the reciprocal
polynomials of $a(x)$, $b(x)$, and $h(x)$, respectively. For
$i,j\in\{0,\dots,r-1\}$, define
\[
Z_i^{(1)}
\triangleq
\big(x^i,(a^*)^{t-1}x^i\big),
\qquad
Z_j^{(2)}
\triangleq
\big(0,h^*x^j\big).
\]

Then the set
\[
\mathcal B_Z
\triangleq
\Big\{
[Z_i^{(1)}]
\Big\}_{i=0}^{r-1}
\cup
\Big\{
[Z_j^{(2)}]
\Big\}_{j=0}^{r-1}
\]
forms an $\mathbb F_2$-basis of $\mathcal L_Z$. Moreover, for $i,j\in\{0,\dots,r-1\}$, define
\[
X_i^{(1)}
\triangleq
\big(a^{t-1}x^i,x^i\big),
\qquad
X_j^{(2)}
\triangleq
\big(hx^j,0\big).
\]

Then the set
\[
\mathcal B_X
\triangleq
\Big\{
[X_i^{(1)}]
\Big\}_{i=0}^{r-1}
\cup
\Big\{
[X_j^{(2)}]
\Big\}_{j=0}^{r-1}
\]
forms an $\mathbb F_2$-basis of $\mathcal L_X$.

\end{theorem}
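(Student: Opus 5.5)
The plan is to work entirely in the ring $\mathcal R_n$, identifying a vector $(\mathbf u,\mathbf v)\in\mathbb F_2^{2n}$ with a pair $(u(x),v(x))\in\mathcal R_n^2$. Under this identification, multiplication by a circulant $\mathrm{Circ}(f)$ is multiplication by $f(x)$, and its transpose is multiplication by $\bar f(x)\triangleq f(x^{-1})$, which equals $f^*$ up to a monomial (a unit in $\mathcal R_n$). Then
\[
\ker(H_Z)=\{(u,v):\bar b\,u+\bar a\,v=0\},\qquad \rs(H_X)=\{(g\bar a,\,g\bar b): g\in\mathcal R_n\},
\]
or the analogous statement with bars exchanged, depending on the row/column convention; I will fix one convention and check it against the circulant definition. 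Since $\bar b=\bar a^{\,t}$ by Lemma~\ref{weight-lemma}, the reciprocal versions satisfy $b^*=(a^*)^t$ up to a monomial. Likewise $a^*h^*=x^n-1$ up to a monomial, because $a h=x^n-1$. I expect the main nuisance to be these monomial/unit factors, together with the exact meaning of "reciprocal" after reduction mod $x^n-1$. I will absorb them by noting that multiplying a basis element's second coordinate by a unit $x^m$ changes neither kernel membership nor the independence argument below, provided the same normalization is used consistently. If needed, I will replace $a^*,h^*$ by $\bar a,\bar h$ throughout.

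\emph{Step 1 (membership).} For $Z_i^{(1)}$, compute $\bar b\,x^i+\bar a\cdot\bar a^{\,t-1}x^i=2\bar b\,x^i=0$. For $Z_j^{(2)}$, compute $\bar a\,\bar h\,x^j=\overline{x^n-1}\,x^j=0$. The $X$-side is symmetric: $X_i^{(1)}$ satisfies $a\cdot a^{t-1}x^i+b\,x^i=0$, and $X_j^{(2)}$ satisfies $a h x^j=0$.

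\emph{Step 2 (counting).} Here $\gcd(a,b,x^n-1)=a$, which has degree $r$. By Proposition~\ref{prop1} we get $k=2r$, so $\dim\mathcal L_Z=\dim\mathcal L_X=2r$. It therefore suffices to show that the $2r$ listed cosets are linearly independent.

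\emph{Step 3 (independence, the core step).} Suppose $\sum_i c_iZ_i^{(1)}+\sum_j d_jZ_j^{(2)}\in\rs(H_X)$. Put $p=\sum c_ix^i$ and $q=\sum d_jx^j$, both of degree $<r$. Then $p=g\bar a$ and $\bar a^{\,t-1}p+\bar h q=g\bar b$ for some $g$.

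The first equation places $p$ in the ideal of $\mathcal R_n$ generated by $\bar a$. Since $a\mid x^n-1$, this ideal is the cyclic code with generator $a^*$ (up to a unit) of degree $r$, and its nonzero elements have degree $\ge r$ when reduced. Hence $p=0$ and all $c_i=0$.

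Then $g\bar a=0$ forces $g=e\bar h$ for some $e$. The second equation becomes $\bar h q=e\bar h\bar a^{\,t}$, i.e.\ $\bar h\,(q-e\bar a^{\,t})=0$. This gives $a^*\mid q-e\,(a^*)^t$, hence $a^*\mid q$. Since $\deg q<r$, we conclude $q=0$ and all $d_j=0$.

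The $\mathcal L_X$ statement follows by the identical argument with bars removed, using $\rs(H_Z)=\{(g\bar b,g\bar a)\}$ in the matching convention. The hardest part will be making the annihilator facts rigorous up to units: that $\mathrm{Ann}(\bar a)=(\bar h)$, and that $\bar h\,w=0$ implies $a^*\mid w$. I would derive both from the factorization $x^n-1=a h$ and the fact that the ideal $(f)$ in $\mathcal R_n$ equals $(\gcd(f,x^n-1))$.
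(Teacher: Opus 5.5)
Your proof is correct, once $a^*$ is read as $\bar a=a(x^{-1})$ as discussed below, but it reaches spanning by a different route from the paper. The paper proves spanning constructively. It first parametrizes $\ker(H_Z)$ as $\{(p,(a^*)^{t-1}p+h^*q)\}$ using the annihilator of $a^*$. It then reduces $p$ modulo the first-coordinate ideal of $\rs(H_X)$, and the leftover $\widetilde q$ modulo $a^*$, which lands on a combination of the $2r$ listed vectors. Independence is then shown by essentially your Step~3.

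You instead combine membership with the count $\dim\mathcal L_Z=\dim\mathcal L_X=k=2r$ from Proposition~\ref{prop1}. This applies because $a\mid x^n-1$ and $b\equiv a^t$ give $\gcd(a,b,x^n-1)=a$.
\begin{itemize}
\item Your route is shorter. It needs only $\mathrm{Ann}(\bar a)=(\bar h)$ and $\mathrm{Ann}(\bar h)=(\bar a)$, not the full description of $\ker(H_Z)$.
\item The paper's route is self-contained: it effectively re-derives $k=2r$ rather than citing it. It also gives an explicit procedure for computing the coordinates $(\alpha,\beta)$ of an arbitrary logical operator.
\end{itemize}
In Step~3 you can shortcut: $e\bar h\bar a^{t}=0$ outright, so $\bar h q=0$ directly.

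One claim must be corrected. Multiplying a second coordinate by a unit $x^m$ is harmless for $Z_j^{(2)}=(0,\cdot)$ and for the independence argument. It is \emph{not} harmless for membership of $Z_i^{(1)}$: $(x^i,x^m\bar a^{t-1}x^i)\in\ker(H_Z)$ if and only if $\bar a^{t}(1+x^m)=0$. Since $a^*=x^r\bar a$, the literal $(a^*)^{t-1}$ corresponds to $m=r(t-1)$, and this typically fails. For example, for $\mathrm{UB}(1+x+x^2+x^4,1)$ over $\mathcal R_{21}$ it would require $x^{21}-1$ to divide $a^2(1+x)^4$, a nonzero polynomial of degree $12$.

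So your fallback of using $\bar a$ (reading $a^*$ as $a(x^{-1})$ in $\mathcal R_n$) is not optional. It is exactly what makes the $Z$-part true in the convention where the $X$-part holds as written. Commit to it rather than appealing to unit-absorption.

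In this respect your consistent barred descriptions of $\ker(H_Z)$ and $\rs(H_X)$ are cleaner than the paper's proof. The paper pairs a starred $\ker(H_Z)$ with $\rs(H_X)=\{(as,a^ts)\}$, whereas in any fixed circulant convention these two are written with the same pair, $(a,b)$ or $(\bar a,\bar b)$.

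Finally, a small slip on the $X$ side: with bars removed, $\rs(H_Z)=\{(gb,ga)\}$, not $\{(g\bar b,g\bar a)\}$. Your independence argument is unaffected.
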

\begin{proof}
We prove the claim for $\mathcal L_Z$; the statement for $\mathcal L_X$
follows analogously.

Since $b=a^t$ and $b^*(x)=x^{\deg(b)}b(x^{-1})$,  we have $b^*=(a^*)^t$. Moreover, from
$ah=x^n-1$, it follows that
$a^*h^*=x^n-1=0$ in $\mathcal R_n$. Since
$a^*\mid(x^n-1)$, the set of all polynomials
$y\in\mathcal R_n$ satisfying $a^*y=0$ is precisely
the ideal
$(h^*)=\{h^*q:q\in\mathcal R_n\}$. Hence
\begin{align*}
\ker(H_Z)
&=
\{(u,v)\in\mathcal R_n^2:\ b^*u+a^*v=0\}
\\
&=
\{(p,(a^*)^{t-1}p+h^*q):p,q\in\mathcal R_n\}.
\end{align*}
Therefore $Z_i^{(1)},Z_j^{(2)}\in\ker(H_Z)$. Moreover,
\[
\rs(H_X)=\{(as,a^ts)\in\mathcal R_n^2:s\in\mathcal R_n\}.
\]

\textit{Spanning:}
Let $(u,v)\in\ker(H_Z)$. Then
$(u,v)=(p,(a^*)^{t-1}p+h^*q)$ for some
$p,q\in\mathcal R_n$. Since every element modulo $(a)$ admits a unique
representative of degree smaller than $r=\deg(a)$, write
$p=p_0+as$ with $\deg(p_0)<r$. Modulo $\rs(H_X)$,
\begin{equation}
\label{eq:mod_rs}
    (u,v)
\sim
(p_0,(a^*)^{t-1}p+h^*q+a^ts).
\end{equation}
Since $\rs(H_X)\subseteq\ker(H_Z)$, the vector in \eqref{eq:mod_rs}
lies in $\ker(H_Z)$ and thus equals
$(p_0,(a^*)^{t-1}p_0+h^*\widetilde q)$
for some $\widetilde q\in\mathcal R_n$. Similarly, every element modulo
$(a^*)$ admits a unique representative of degree smaller than $r$, so
write $\widetilde q=q_0+a^*z$ with $\deg(q_0)<r$. Using
$a^*h^*=0$, we obtain
$[(u,v)]
=
[(p_0,(a^*)^{t-1}p_0+h^*q_0)].$
Expanding
$p_0=\sum_{i=0}^{r-1}\alpha_i x^i$
and
$q_0=\sum_{j=0}^{r-1}\beta_j x^j$
gives
\[
[(u,v)]
=
\sum_{i=0}^{r-1}\alpha_i[Z_i^{(1)}]
+
\sum_{j=0}^{r-1}\beta_j[Z_j^{(2)}].
\]
Thus $\mathcal B_Z$ spans $\mathcal L_Z$.

\textit{Linear independence:}
Suppose
$(p_0,(a^*)^{t-1}p_0+h^*q_0)
=
(as,a^ts),$
where $\deg(p_0),\deg(q_0)<r$. From the first coordinate,
$p_0\equiv as \pmod{x^n-1}$, so $a\mid p_0$ in $\mathbb F_2[x]$.
Since $\deg(p_0)<r=\deg(a)$, it follows that $p_0=0$.
Then $as=0$ in $\mathcal R_n$, implying that $s$ is a multiple of $h$,
and therefore $a^ts=0$. The second coordinate gives
$h^*q_0=0$. Since $h^*a^*=0$ in $\mathcal R_n$ and
$h^*\mid(x^n-1)$, it follows that $q_0$ is a multiple of $a^*$.
Because $\deg(q_0)<r=\deg(a^*)$, we conclude that $q_0=0$.
Hence all coefficients vanish, and $\mathcal B_Z$ is linearly independent.
Therefore $\mathcal B_Z$ is an $\mathbb F_2$-basis of $\mathcal L_Z$.
\end{proof}

\begin{remark}[Parametrization of logical classes]
\label{remark:parametrization}

For $\alpha,\beta\in\mathbb F_2^r$, define
\[
\Lambda_Z(\alpha,\beta)
\triangleq
\sum_{i=0}^{r-1}\alpha_i Z_i^{(1)}
+
\sum_{j=0}^{r-1}\beta_j Z_j^{(2)}.
\]
Since $\mathcal B_Z$ forms a basis of $\mathcal L_Z$, the map
$(\alpha,\beta)
\longmapsto
[\Lambda_Z(\alpha,\beta)]$
defines a bijection from $\mathbb F_2^{2r}$ onto $\mathcal L_Z$.
Consequently, every coset in $\mathcal L_Z$ admits a unique representative of the form
$\Lambda_Z(\alpha,\beta).$

In particular,
$[\Lambda_Z(\alpha,\beta)]
=
\Lambda_Z(\alpha,\beta)+\rs(H_X),$
and the corresponding logical class is trivial if and only if
$(\alpha,\beta)=(0,0).$
Equivalently, a logical class is nontrivial precisely when
$(\alpha,\beta)\neq(0,0).$
Similarly, define
\[
\Lambda_X(\alpha,\beta)
\triangleq
\sum_{i=0}^{r-1}\alpha_i X_i^{(1)}
+
\sum_{j=0}^{r-1}\beta_j X_j^{(2)}.
\]
Then the map
$(\alpha,\beta)
\longmapsto
[\Lambda_X(\alpha,\beta)]$
defines a bijection from $\mathbb F_2^{2r}$ onto $\mathcal L_X$, and every coset in $\mathcal L_X$ admits a unique representative of the form
$\Lambda_X(\alpha,\beta).$
\end{remark}
Using the distance characterization of CSS codes from Section II, together
with Remark \ref{remark:parametrization}, every nontrivial logical class in
$\mathcal L_Z$ admits a unique representative of the form
$\Lambda_Z(\alpha,\beta)$ with
$(\alpha,\beta)\neq(0,0)$. Hence
\[
d_Z
=
\min_{\substack{
(\alpha,\beta)\neq(0,0)\\
s\in\rs(H_X)
}}
\wt\big(
\Lambda_Z(\alpha,\beta)+s
\big).
\]
Similarly,
\[
d_X
=
\min_{\substack{
(\alpha,\beta)\neq(0,0)\\
s\in\rs(H_Z)
}}
\wt\big(
\Lambda_X(\alpha,\beta)+s
\big).
\]

Restricting attention to logical representatives
$\Lambda_X(\alpha,\beta)$ and $\Lambda_Z(\alpha,\beta)$ with
$(\alpha,\beta)\neq(0,0)$ and
$\wt(\alpha)+\wt(\beta)\le q$
yields the following bounds on the minimum distance.
\begin{corollary}[Weight-$q$ distance upper bounds]
\label{cor:low-order-bound}
Let $f=a^{t-1}$. For $q\ge1$, define
\[
U_q^X\triangleq
\min_{\substack{
(\alpha,\beta)\neq(0,0)\\
\wt(\alpha)+\wt(\beta)\le q
}}
\wt((\alpha f+\beta h,\alpha)),
\]
where $\alpha,\beta$ are polynomials of degree less than $r$. Define
$U_q^Z$ analogously from $\Lambda_Z(\alpha,\beta)$. Then
\[
d\le \min\{U_q^X,U_q^Z\}.
\]
\end{corollary}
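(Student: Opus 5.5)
The approach is to exhibit, for each admissible $(\alpha,\beta)$, an explicit nontrivial logical operator whose weight is the quantity inside the minimum, and then use $d=\min(d_X,d_Z)$. The first step is to identify the polynomial $(\alpha f+\beta h,\alpha)$ with the representative $\Lambda_X(\alpha,\beta)$ of Remark~\ref{remark:parametrization}. Reading $\alpha=\sum_{i=0}^{r-1}\alpha_i x^i$ and $\beta=\sum_{j=0}^{r-1}\beta_j x^j$ as polynomials of degree less than $r$, linearity in $\mathcal R_n^2$ gives
\[
\Lambda_X(\alpha,\beta)=\sum_i \alpha_i\big(a^{t-1}x^i,x^i\big)+\sum_j\beta_j\big(hx^j,0\big)=\big(\alpha f+\beta h,\ \alpha\big).
\]
The condition $\wt(\alpha)+\wt(\beta)\le q$ is then the same whether it is read on coefficient vectors or on polynomials.

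Next I check nontriviality. By Theorem~\ref{thm:logical-basis-ub}, $\Lambda_X(\alpha,\beta)\in\ker(H_X)$. By Remark~\ref{remark:parametrization}, its class in $\mathcal L_X$ is nontrivial whenever $(\alpha,\beta)\neq(0,0)$, so $\Lambda_X(\alpha,\beta)\in\ker(H_X)\setminus\rs(H_Z)$. From the characterization of $d_X$ in Section~II, $d_X\le \wt(\Lambda_X(\alpha,\beta))$ for every admissible pair. Minimizing over the (nonempty, finite) restricted set gives $d_X\le U_q^X$, because the restriction only shrinks the feasible set compared with the full minimum over all $(\alpha,\beta)\neq(0,0)$ and $s\in\rs(H_Z)$, where we take $s=0$.

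The $Z$ side is handled the same way. $U_q^Z$ is built from $\Lambda_Z(\alpha,\beta)=(\alpha,(a^*)^{t-1}\alpha+\beta h^*)$, which Remark~\ref{remark:parametrization} also shows to be nontrivial. This gives $d_Z\le U_q^Z$. Finally, $d=\min(d_X,d_Z)$, and for GB codes $d_X=d_Z=d$, so $d\le\min\{U_q^X,U_q^Z\}$.

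The only delicate point is the bookkeeping in the identification of $(\alpha f+\beta h,\alpha)$ with $\Lambda_X(\alpha,\beta)$. Weight is taken on the length-$2n$ binary vector of reduced coefficients in $\mathcal R_n^2$, and $\alpha,\beta$ play a dual role as vectors in $\mathbb F_2^r$ and polynomials. Once this is fixed, the rest is immediate from the basis theorem.
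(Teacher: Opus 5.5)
Your proposal is correct and is essentially the paper's own argument: the paper obtains the corollary by restricting the minimization in its expressions $d_X=\min\wt(\Lambda_X(\alpha,\beta)+s)$ and $d_Z=\min\wt(\Lambda_Z(\alpha,\beta)+s)$ to $s=0$ and to pairs with $\wt(\alpha)+\wt(\beta)\le q$, with nontriviality supplied by Theorem~\ref{thm:logical-basis-ub} and Remark~\ref{remark:parametrization}. Your extra bookkeeping adds nothing new in substance; it makes explicit two steps the paper leaves implicit, namely the identification $\Lambda_X(\alpha,\beta)=(\alpha f+\beta h,\alpha)$ and the fact that $\min(d_X,d_Z)\le\min\{U_q^X,U_q^Z\}$ holds without invoking $d_X=d_Z$.
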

The quantities $U_q^X$ and $U_q^Z$ can be bounded through overlap
patterns among shifted copies of $f=a^{t-1}$ and $h$. To quantify these
overlaps, let $\mathrm{Circ}(g)$ denote the circulant matrix generated by
$g(x)\in\mathcal R_n$, and let
\[
\mathcal C_s(g)
\]
denote the submatrix formed by the first $s$ rows of
$\mathrm{Circ}(g)$.

\begin{definition}[Maximum induced $2c$-cycle density]
Let $H$ be a binary parity-check matrix with $m$ rows, and let
$\mathcal G(H)$ denote its Tanner graph. For a subset
$\mathcal S\subseteq[m]$ with $|\mathcal S|=c$, let
\[
N^{\mathrm{ind}}_{2c}(\mathcal S)
\]
denote the number of induced, i.e., chordless, cycles of length $2c$
in $\mathcal G(H)$
whose set of check nodes is exactly $\mathcal S$. Define
\[
\rho^{\mathrm{ind}}_{2c}(H)
\triangleq
\max_{\substack{
\mathcal S\subseteq[m]\\
|\mathcal S|=c
}}
N^{\mathrm{ind}}_{2c}(\mathcal S).
\]
\end{definition}

The following theorem derives bounds for $U_q^X$ from
$\mathcal C_r(f),\mathcal C_r(h)$ and for $U_q^Z$ from
$\mathcal C_r(f^*),\mathcal C_r(h^*)$, through the induced $4$- and
$6$-cycle densities of these submatrices.
\begin{theorem}[Distance bounds from induced-cycle densities]
\label{thm:cycle-induced-bounds}
Let $f=a^{t-1}$. Then
\begin{equation}
    U_1^X
\le
B_1^X
\triangleq
\min\{\wt(f)+1,\wt(h)\}.
\label{eq:B1}
\end{equation}
Moreover,
\begin{align}
U_2^X
\le
B_2^X
\triangleq
\min\Big\{
&\,2\wt(f)+1- 
\sqrt{1+8\rho_4^{\mathrm{ind}}(\mathcal C_r(f))},
\nonumber\\
&
2\wt(h)
-1 - \sqrt{1+8\rho_4^{\mathrm{ind}}(\mathcal C_r(h))}
\Big\},
\label{eq:B2}
\end{align}
and
\begin{align}
U_3^X
\le
B_3^X
\triangleq
\min\Big\{
&\,3\wt(f)+3
-6\big(\rho^{\rm ind}_6(\mathcal C_r(f))\big)^{1/3},
\nonumber\\
&
3\wt(h)
-6\big(\rho^{\rm ind}_6(\mathcal C_r(h))\big)^{1/3}
\Big\}.
\label{eq:B3}
\end{align}

The bounds $B_q^Z$ are obtained similarly from
$\Lambda_Z(\alpha,\beta)$. Hence,
\[
d\le
\min_{q\in\{1,2,3\}}
\{B_q^X,B_q^Z\}.
\]
\end{theorem}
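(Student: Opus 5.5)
The plan is to bound $U_q^X$ by exhibiting, for each $q\in\{1,2,3\}$, explicit admissible pairs $(\alpha,\beta)$ whose representatives $(\alpha f+\beta h,\alpha)$ have small weight. We then combine these with Corollary~\ref{cor:low-order-bound}; the $Z$-side statements follow identically with $f^*,h^*$ in place of $f,h$, using $\Lambda_Z$.

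\emph{The case $q=1$.} Take $\alpha=x^i$, $\beta=0$. The representative $(x^if,x^i)$ has weight $\wt(f)+1$, since multiplication by $x^i$ is a cyclic shift and preserves weight. Alternatively, take $\alpha=0$, $\beta=x^j$, which gives $(x^jh,0)$ of weight $\wt(h)$. Minimizing over the two choices gives \eqref{eq:B1}.

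\emph{The case $q=2$.} Take $\alpha=x^i+x^{i'}$ with $i\neq i'<r$ and $\beta=0$. The first block is $x^if+x^{i'}f$, i.e.\ the sum of rows $i,i'$ of $\mathcal C_r(f)$. Its weight is $2\wt(f)-2\,|\supp(x^if)\cap\supp(x^{i'}f)|$, and the second block adds weight $2$. Two check rows sharing $c_{ii'}$ columns produce $\binom{c_{ii'}}{2}$ 4-cycles with check set $\{i,i'\}$, and all 4-cycles are chordless. Choosing the pair $\mathcal S=\{i,i'\}$ attaining $\rho_4^{\mathrm{ind}}(\mathcal C_r(f))=\rho$ gives $\binom{c}{2}=\rho$, so $c=\tfrac{1+\sqrt{1+8\rho}}{2}$. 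Hence
\[
\wt = 2\wt(f)+2-2c = 2\wt(f)+1-\sqrt{1+8\rho}.
\]
The $h$ branch is the same with $\alpha=0$ and $\beta=x^j+x^{j'}$. There is no $+2$ from a second block, which yields $2\wt(h)-1-\sqrt{1+8\rho}$.

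\emph{The case $q=3$.} Take three rows $\mathcal S=\{i_1,i_2,i_3\}$ of $\mathcal C_r(f)$ maximizing the number of induced 6-cycles, and the corresponding $\alpha$ of weight $3$. The sum of three rows has weight
\[
3\wt(f)-2(c_{12}+c_{13}+c_{23})+4c_{123},
\]
where $c_{uv}$ are the pairwise overlaps and $c_{123}$ is the triple overlap. An induced 6-cycle on $\mathcal S$ selects one variable in each pairwise overlap, and chordlessness forces each selected variable to be adjacent to exactly two of the three checks. So the count is $N=e_{12}e_{13}e_{23}$, where $e_{uv}=c_{uv}-c_{123}$ counts columns in exactly those two rows. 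Rewriting the weight gives
\[
\wt = 3\wt(f)-2(e_{12}+e_{13}+e_{23})-2c_{123}\le 3\wt(f)-2(e_{12}+e_{13}+e_{23}).
\]
By AM--GM, $e_{12}+e_{13}+e_{23}\ge 3N^{1/3}$, so the first block has weight at most $3\wt(f)-6\rho_6^{1/3}$. Adding $3$ for $\wt(\alpha)$ gives the $f$ branch of \eqref{eq:B3}, and the $h$ branch follows with $\beta$ in place of $\alpha$.

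Finally, $d\le\min\{U_q^X,U_q^Z\}\le \min\{B_q^X,B_q^Z\}$ for each $q$. The main obstacle is the $q=3$ counting step: it requires correctly identifying induced 6-cycles with products of exact pairwise overlaps, so that the chord and triple-overlap terms enter the weight identity with the right sign. One must also confirm that the chosen index sets stay within $\{0,\dots,r-1\}$, which holds because only the first $r$ rows of the circulant are used.
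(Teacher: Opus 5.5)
Your proposal is essentially the paper's proof. It uses the same single-family representatives ($\beta=0$ or $\alpha=0$), the same identifications $N_4^{\mathrm{ind}}=\binom{c}{2}$ and $N_6^{\mathrm{ind}}=e_{12}e_{13}e_{23}$ over exact pairwise overlaps, and the same AM--GM step; your inclusion--exclusion identity for $q=3$ just makes explicit the inequality the paper states. One small gap, which the paper shares: when $\rho_4^{\mathrm{ind}}=0$ the chosen pair may have $c=0$, so $\binom{c}{2}=\rho$ does not force $c=\tfrac{1+\sqrt{1+8\rho}}{2}=1$, and that case needs $U_2^X\le U_1^X\le\wt(f)+1\le 2\wt(f)$ (for $f\neq0$) and $U_2^X\le U_1^X\le\wt(h)\le 2\wt(h)-2$ instead.
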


\begin{proof}
By Corollary \ref{cor:low-order-bound}, any nonzero representative $\Lambda_X(\alpha,\beta)$ with $\wt(\alpha)+\wt(\beta)\le q$ provides an upper bound on $d$. For any $q\ge1$, restricting to $\beta=0$ or $\alpha=0$ gives
\[
U_q^X
\le
\min\!\left\{
\min_{\substack{\alpha\neq0\\\wt(\alpha)\le q}}\wt(\alpha f,\alpha),\;
\min_{\substack{\beta\neq0\\\wt(\beta)\le q}}\wt(\beta h,0)
\right\}.
\]
Weight $q=1$: Taking $(\alpha,\beta)=(x^i,0)$ and $(0,x^j)$ directly gives $U_1^X=\min\{\wt(f)+1,\wt(h)\}$.

Weight $q=2$: Consider two rows $fx^{i_1}$ and $fx^{i_2}$ of
$\mathcal C_r(f)$, and let
$\mathcal N_{12}
\triangleq
\supp(fx^{i_1})\cap\supp(fx^{i_2})$
denote the set of columns containing ones in both rows. Then the representative
$(fx^{i_1}+fx^{i_2},x^{i_1}+x^{i_2})$ has weight
$2\wt(f)+2-2|\mathcal N_{12}|$. Moreover, every pair of columns in
$\mathcal N_{12}$ induces a chordless $4$-cycle on the two selected rows,
so
$N_4^{\mathrm{ind}}=\binom{|\mathcal N_{12}|}{2}.$
Thus
$|\mathcal N_{12}|=\frac{1+\sqrt{1+8N_4^{\mathrm{ind}}}}{2}$.
Maximizing over row pairs yields \ref{eq:B2}.
The same argument on the $h$-family gives the second term; taking the minimum gives $B_2^X$.

Weight $q=3$: Consider three rows $fx^{i_1},fx^{i_2},fx^{i_3}$ of
$\mathcal C_r(f)$. For $\{a,b,c\}=\{1,2,3\}$, define $\mathcal N_{ab}$ as the set of weight-$2$ columns supported
exactly on rows $i_a$ and $i_b$ but not $i_c$. Then
\[
\wt(fx^{i_1}+fx^{i_2}+fx^{i_3})
\le
3\wt(f)-2\big(
|\mathcal N_{12}|+|\mathcal N_{13}|+|\mathcal N_{23}|
\big).
\]
Choosing one column from each of
$\mathcal N_{12},\mathcal N_{13},\mathcal N_{23}$ gives a distinct
chordless $6$-cycle on the three selected rows; hence
$N_6^{\mathrm{ind}}
=
|\mathcal N_{12}|\,|\mathcal N_{13}|\,|\mathcal N_{23}|.$
By AM--GM,
$|\mathcal N_{12}|+|\mathcal N_{13}|+|\mathcal N_{23}|
\ge
3\big(N_6^{\mathrm{ind}}\big)^{1/3}.$
Therefore, maximizing over all triples of rows gives the first term in
\eqref{eq:B3}. The same argument applied to the $h$-family gives the
second term.
\end{proof}
\begin{corollary}[Weight-1 distance bound]
\label{cor:first-order-ub}
For any $\mathrm{UB}(a(x),\ell)$ code in the divisor case,$
d\le \min\{\wt(a)^\ell+1,\wt(h)\}$.
\end{corollary}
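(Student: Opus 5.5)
The plan is to combine the weight-$1$ bound \eqref{eq:B1} of Theorem \ref{thm:cycle-induced-bounds} with an estimate $\wt(f)\le \wt(a)^\ell$ for $f=a^{t-1}$, $t=2^\ell$. Theorem \ref{thm:cycle-induced-bounds} gives $d\le U_1^X\le B_1^X=\min\{\wt(f)+1,\wt(h)\}$. It therefore suffices to show $\wt(f)\le \wt(a)^\ell$; substituting this into $B_1^X$ then yields the claimed inequality.

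To bound $\wt(f)$, I will use the binary expansion $t-1=2^\ell-1=1+2+4+\cdots+2^{\ell-1}$. This gives the factorization
\[
f=a^{t-1}=\prod_{m=0}^{\ell-1} a^{2^m}\quad\text{in }\mathcal R_n,
\]
a product of exactly $\ell$ factors. By Lemma \ref{weight-lemma}, each factor satisfies $a^{2^m}=a(x^{2^m})$ and hence $\wt(a^{2^m})\le \wt(a)$. The trivial factor $m=0$ needs no lemma.

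The remaining ingredient is submultiplicativity of Hamming weight in $\mathcal R_n$: $\wt(uv)\le \wt(u)\wt(v)$ for all $u,v\in\mathcal R_n$. I will argue this directly. Write $u=\sum_{i\in\supp(u)}x^i$, so that $uv=\sum_{i\in\supp(u)}x^i v$. Each $x^iv$ is a cyclic shift of $v$ and so has weight $\wt(v)$. Over $\mathbb F_2$ the weight of a sum is at most the sum of the weights, since cancellations only reduce it. This gives $\wt(uv)\le \wt(u)\wt(v)$. Applying the inequality inductively to the $\ell$ factors yields
\[
\wt(f)\le\prod_{m=0}^{\ell-1}\wt(a^{2^m})\le \wt(a)^\ell .
\]

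Finally, I plug this into \eqref{eq:B1} to get $d\le \min\{\wt(a)^\ell+1,\wt(h)\}$. The divisor-case hypothesis is used only to make Theorem \ref{thm:logical-basis-ub}, and hence Corollary \ref{cor:low-order-bound}, applicable. There is no real obstacle here; the only step needing care is the submultiplicativity argument, specifically that reduction modulo $x^n-1$ and $\mathbb F_2$ cancellations can only lower weight. The bound is therefore typically loose, since collisions among the supports of the shifts of $a$ are ignored.
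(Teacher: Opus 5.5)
Your proposal is correct and follows the paper's proof step for step. You invoke $d\le B_1^X=\min\{\wt(a^{t-1})+1,\wt(h)\}$, factor $a^{t-1}=\prod_{m=0}^{\ell-1}a^{2^m}$, bound each factor via Lemma \ref{weight-lemma}, and apply $\wt(uv)\le\wt(u)\wt(v)$ in $\mathcal R_n$; the only addition is your short sum-of-cyclic-shifts justification of that submultiplicativity, which the paper simply asserts.
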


\begin{proof}
By Theorem \ref{thm:cycle-induced-bounds},
$
d \le B_1^X = \min\{\wt(a^{t-1})+1,\wt(h)\}.
$
Since $t-1=\sum_{i=0}^{\ell-1}2^i$, we have
$
a^{t-1}=\prod_{i=0}^{\ell-1} a^{2^i}.
$
By Lemma \ref{weight-lemma}, $\wt(a^{2^i})\le \wt(a)$ for each $i$.
Using $\wt(pq)\le \wt(p)\wt(q)$ in $\mathcal R_n$, we obtain
\[
\wt(a^{t-1})
\le
\prod_{i=0}^{\ell-1}\wt(a^{2^i})
\le
\wt(a)^\ell.
\]
The claim follows.
\end{proof}
\begin{example}[Tight low-order bounds]
\label{example:tightness}
Consider the code $[[42,8,5]]=\mathrm{UB}(1+x+x^2+x^4,1)$ over
$\mathcal R_{21}=\mathbb F_2[x]/(x^{21}-1)$, where
$f(x)=a(x)^{2^1-1}=1+x+x^2+x^4$ and
$h(x)=1+x+x^3+x^7+x^8+x^{10}+x^{14}+x^{15}+x^{17}$.
Here $\wt(a)=\wt(f)=4$ and $\wt(h)=9$.

The weight-1,2, and 3 searches give
$U_1^X= U_2^X = U_3^X = 5.$
From the first-order bound \ref{eq:B1},
$B_1^X
=
\min\{\wt(f)+1,\wt(h)\}
=
5.$
Moreover,
$\rho^{\mathrm{ind}}_4(\mathcal C_4(f))=1$ and
$\rho^{\mathrm{ind}}_4(\mathcal C_4(h))=3$, yielding
\[
B_2^X
=
\min\{9 - \sqrt9,\;17 - \sqrt{25}\}
=
6.
\]
Similarly,
$\rho^{\mathrm{ind}}_6(\mathcal C_4(f))=4$ and
$\rho^{\mathrm{ind}}_6(\mathcal C_4(h))=27$, giving
\[
B_3^X
=
\min\{15-6\cdot4^{1/3},\;27-18\}
=
5.475.
\]

Hence, $B_1^X$ and $\lfloor B_3^X\rfloor$ both recover the exact distance $d=5$.
\end{example}
\begin{remark}
For the code in Example \ref{example:tightness}, both the weight-1 and
weight-3 bounds recover the exact distance $d=5$. In particular,
\[
B_1^X=5,
\qquad
\lfloor B_3^X\rfloor=5.
\]

The first-order bound
$d_{GB}\le \wt(f)+1$
was previously observed for general GB codes
in \cite{dastbasteh2025generalized,wang2022distance}. However,
Corollary \ref{cor:low-order-bound} and
Theorem \ref{thm:cycle-induced-bounds} yield substantially tighter
code-dependent bounds for specific UB instances.

For example, the following code over $\mathcal R_{30}$
\[
[[60,8,5]]
=
\mathrm{UB}(1+x+x^3+x^4,5)
\]
satisfies
$
B_1^X=15,
B_2^X=6,$
while
$
d = \floor{B_3^X}=\floor{5.0897}.
$
\end{remark}

\noindent
{\textbf{Complexity Analysis.}
In the UB construction, the polynomial $b(x)$ is determined
directly from $a(x)$ through the Frobenius relation
$b(x)=a(x)^{2^\ell}$. Hence, the search is performed only over $a(x)$.
By Lemma \ref{weight-lemma},
$
\wt(b)\le \wt(a).
$
Hence, for a target stabilizer weight bound $w$, it suffices to search
over polynomials satisfying
$
\wt(a)\le \frac{w}{2}
$.
Therefore, for fixed $w$, the search space scales as
$\mathcal O(n^{w/2})$. In contrast, general GB constructions require a joint search over both
$a(x)$ and $b(x)$ under the constraint
$\wt(a)+\wt(b)\le w$, leading to a search space scaling with
$\mathcal O(n^w)$.
Thus, the UB restriction substantially reduces the search space while
preserving the low-weight structure.
}
\section{Numerical Results}
\begin{table}[t]
\centering

\label{tab:poly-params}
\renewcommand{\arraystretch}{1.0}
\setlength{\tabcolsep}{3.4pt}
\small
\begin{tabular}{|c|>{\centering\arraybackslash}m{1.2cm}|c|c|c|}
\hline
$a(x)$ & $\ell$ & $[[N,k,d]]$ & $R = \frac{k}{N}$ & $w$ \\
\hline
$x^{7}+x^{4}+x+1$ & 3 & $[[124,14,11]]$ & 0.113 & 8 \\
\hline
$x^{10}+x^{9}+x^{2}+1$ & 4 & $[[146,20,8]]$ & 0.137 & 8 \\
\hline
$x^{12}+x^{10}+x^{9}+1$ & 5 & $[[178,24,13]]$ & 0.135 & 8 \\
\hline
$x^{18}+x^{8}+x^{4} + 1$ & 6 & $[[204, 36, 8]]$ & 0.176 & 8 \\
\hline
$x^{13}+x^{5}+x + 1$ & 5 & $[[234, 26, 14]]$ & 0.111 & 8 \\
\hline
$x^{6}+x^{5}+1$ & 3 & $[[252, 12, 14]]$ & 0.048 & 6 \\
\hline
$x^{7}+x^{4}+1$ & 3 & $[[254, 14, 14]]$ & 0.055 & 6 \\
\hline
$x^{7}+x^{2}+1$ & 5 & $[[372, 14, 12]]$ & 0.038 & 6 \\
\hline
$x^{6}+x^{1}+1$ & 9 & $[[378, 12, 22]]$ & 0.032 & 6 \\
\hline
$x^{9}+x^{8}+1$ & 7 & $[[730, 18, 20]]$ & 0.025 & 6 \\
\hline
$x^{28}+x^{10}+x+1$ & 4 & $[[1022, 56, 21]]$ & 0.057 & 8 \\
\hline
\end{tabular}
\caption{Selected UB codes with $b(x) = a(x)^t$, $t = 2^\ell$, $N=2n$.}
\label{tab:poly-params}
\vspace{-4.8mm}
\end{table}

\definecolor{colorPair1}{HTML}{1F77B4}   
\definecolor{colorPair2}{HTML}{D62728}   
\definecolor{colorPair3}{HTML}{2CA02C}   

\begin{figure}[t]
    \centering
    \resizebox{0.79\columnwidth}{!}{%
    \begin{tikzpicture}
    \begin{loglogaxis}[
        scale only axis,
        width  = 6.5cm,
        height = 6.5cm,
        xlabel = {Physical Error Rate $p_x$},
        ylabel = {Logical Error Rate},
        xlabel style     = {font=\small},
        ylabel style     = {font=\small},
        tick label style = {font=\small},
        xmin = 0.028, xmax = 0.11,
        ymin = 5e-6,  ymax = 1.5,
        xtick       = {0.03, 0.04, 0.05, 0.06, 0.07, 0.08, 0.09, 0.10},
        xticklabels = {$0.03$, $0.04$, $0.05$, $0.06$,
                       $0.07$, $0.08$, $0.09$, $0.10$},
        xticklabel style = {rotate=45, anchor=north east, font=\small},
        xtick align      = inside,
        ytick align      = inside,
        enlarge x limits = 0.02,
        enlarge y limits = 0.02,
        clip             = true,
        grid             = both,
        grid style       = {line width=0.4pt, black!25},
        minor grid style = {line width=0.3pt, black!15},
        minor tick num   = 1,
        axis line style  = {line width=0.8pt},
        legend style = {
    font         = \footnotesize,
    fill         = white,
    draw         = black,
    cells        = {anchor=west},
    column sep   = 8pt,
    row sep      = 1pt,
    inner sep    = 4pt,
    legend columns = 2,
},
legend to name    = ferlegendBB,
legend cell align = left,
    ]

    \addplot[color=colorPair1, mark=*, mark size=3pt,
             line width=1.4pt, solid] coordinates {
        (0.03, 0.00011)(0.04, 0.00154)(0.05, 0.01084)(0.06, 0.05467)
        (0.07, 0.14658)(0.08, 0.29492)(0.09, 0.47678)(0.10, 0.60643)
    };
    \addlegendentry{$[[252,12, 14]]$ UB}

    \addplot[color=colorPair1, mark=o, mark size=3pt,
             mark options={fill=white, draw=colorPair1, line width=0.8pt},
             line width=1.4pt, dashed] coordinates {
        (0.03, 0.00003)(0.04, 0.00066)(0.05, 0.00711)(0.06, 0.04027)
        (0.07, 0.13579)(0.08, 0.29883)(0.09, 0.46749)(0.10, 0.62651)
    };
    \addlegendentry{$[[288,12,18]]$ BB\cite{Bravyi_2024}}

    \addplot[color=colorPair2, mark=square*, mark size=3pt,
             line width=1.4pt, solid] coordinates {
        (0.04, 0.00003)(0.05, 0.00042)(0.06, 0.00742)(0.07, 0.06107)
        (0.08, 0.22696)(0.09, 0.53896)(0.10, 0.83511)
    };
    \addlegendentry{$[[730,18, 20]]$ UB}

    \addplot[color=colorPair2, mark=square, mark size=3pt,
             mark options={fill=white, draw=colorPair2, line width=0.8pt},
             line width=1.4pt, dashed] coordinates {
        (0.04, 0.00001)(0.05, 0.00032)(0.06, 0.00602)(0.07, 0.04611)
        (0.08, 0.20770)(0.09, 0.49026)(0.10, 0.80319)
    };
    \addlegendentry{$[[756,16, \leq 34]]$ BB\cite{Bravyi_2024}}

    \addplot[color=colorPair3, mark=triangle*, mark size=3.5pt,
             line width=1.4pt, solid] coordinates {
        (0.03, 0.00005)(0.04, 0.00052)(0.05, 0.00443)(0.06, 0.02945)
        (0.07, 0.11009)(0.08, 0.24355)(0.09, 0.47785)(0.10, 0.73853)
    };
    \addlegendentry{$[[372,14, 12]]$ UB}

    \addplot[color=colorPair3, mark=triangle, mark size=3.5pt,
             mark options={fill=white, draw=colorPair3, line width=0.8pt},
             line width=1.4pt, dashed] coordinates {
        (0.03, 0.00001)(0.04, 0.00023)(0.05, 0.00354)(0.06, 0.02945)
        (0.07, 0.09895)(0.08, 0.26129)(0.09, 0.49684)(0.10, 0.69266)
    };
    \addlegendentry{$[[360,12, \leq 24]]$ BB\cite{Bravyi_2024}}

    \end{loglogaxis}
    \end{tikzpicture}}%
    \par\smallskip
    \ref{ferlegendBB}
    \caption{Performance of selected UB codes compared with BB \cite{Bravyi_2024} codes of similar block lengths and rates.}
    \label{fig:fer_ub_bb}
    \vspace{-4.8mm}
\end{figure}

\definecolor{colorPairA}{HTML}{1F77B4}   
\definecolor{colorPairB}{HTML}{D62728}   
\definecolor{colorPairC}{HTML}{2CA02C}   

\begin{figure}[t]
    \centering
    \resizebox{0.79\columnwidth}{!}{%
    \begin{tikzpicture}
    \begin{loglogaxis}[
        scale only axis,
        width  = 6.5cm,
        height = 6.5cm,
        xlabel = {Physical Error Rate $p_x$},
        ylabel = {Logical Error Rate},
        xlabel style     = {font=\small},
        ylabel style     = {font=\small},
        tick label style = {font=\small},
        xmin = 0.018, xmax = 0.112,
        ymin = 5e-6,  ymax = 1.5,
        xtick       = {0.02, 0.03, 0.04, 0.05, 0.06, 0.07, 0.08, 0.09, 0.10},
        xticklabels = {$0.02$, $0.03$, $0.04$, $0.05$, $0.06$,
                       $0.07$, $0.08$, $0.09$, $0.10$},
        xticklabel style = {rotate=45, anchor=north east, font=\small},
        xtick align      = inside,
        ytick align      = inside,
        enlarge x limits = 0.02,
        enlarge y limits = 0.02,
        clip             = true,
        grid             = both,
        grid style       = {line width=0.4pt, black!25},
        minor grid style = {line width=0.3pt, black!15},
        minor tick num   = 1,
        axis line style  = {line width=0.8pt},
        legend style = {
            font         = \footnotesize,
            fill         = white,
            draw         = black,
            cells        = {anchor=west},
            column sep   = 8pt,
            row sep      = 1pt,
            inner sep    = 4pt,
            legend columns = 2,
        },
        legend to name    = ferlegendA,
        legend cell align = left,
    ]

    \addplot[color=colorPairA, mark=*, mark size=3pt,
             line width=1.4pt, solid] coordinates {
        (0.02, 0.00004)(0.03, 0.00121)(0.04, 0.01259)(0.05, 0.05737)
        (0.06, 0.16444)(0.07, 0.30754)(0.08, 0.50482)(0.09, 0.64490)(0.10, 0.83333)
    };
    \addlegendentry{$[[178,24,13]]$ UB}

    \addplot[color=colorPairA, mark=o, mark size=3pt,
             mark options={fill=white, draw=colorPairA, line width=0.8pt},
             line width=1.4pt, dashed] coordinates {
        (0.02, 0.00002)(0.03, 0.00082)(0.04, 0.00834)(0.05, 0.05889)
        (0.06, 0.15519)(0.07, 0.31365)(0.08, 0.48553)(0.09, 0.61633)(0.10, 0.78646)
    };
    \addlegendentry{$[[180,10, 15 \leq d \leq 18]]$ A5\cite{Panteleev_2021}}

    \addplot[color=colorPairB, mark=square*, mark size=3pt,
             line width=1.4pt, solid] coordinates {
        (0.02, 0.00001)(0.03, 0.00037)(0.04, 0.00629)(0.05, 0.03738)
        (0.06, 0.14205)(0.07, 0.27656)(0.08, 0.50842)(0.09, 0.69266)(0.10, 0.80749)
    };
    \addlegendentry{$[[234,26,14]]$ UB}

    \addplot[color=colorPairB, mark=square, mark size=3pt,
             mark options={fill=white, draw=colorPairB, line width=0.8pt},
             line width=1.4pt, dashed] coordinates {
        (0.02, 0.00003)(0.03, 0.00206)(0.04, 0.02835)(0.05, 0.11683)
        (0.06, 0.34337)(0.07, 0.52747)(0.08, 0.77104)(0.09, 0.89450)(0.10, 0.94118)
    };
    \addlegendentry{$[[254,28, 14 \leq d \leq 20]]$ A1\cite{Panteleev_2021}}

    \addplot[color=colorPairC, mark=triangle*, mark size=3.5pt,
             line width=1.4pt, solid] coordinates {
        (0.03, 0.00001)(0.04, 0.00015)(0.05, 0.00114)(0.06, 0.02124)
        (0.07, 0.17640)(0.08, 0.60204)(0.09, 0.88268)(0.10, 0.98052)
    };
    \addlegendentry{$[[1022,56, 21]]$ UB}

    \addplot[color=colorPairC, mark=triangle, mark size=3.5pt,
             mark options={fill=white, draw=colorPairC, line width=0.8pt},
             line width=1.4pt, dashed] coordinates {
        (0.03, 0.00017)(0.04, 0.00116)(0.05, 0.00769)(0.06, 0.05288)
        (0.07, 0.21495)(0.08, 0.51361)(0.09, 0.84358)(0.10, 0.98052)
    };
    \addlegendentry{$[[900,50,15]]$ A6\cite{Panteleev_2021}}

    \end{loglogaxis}
    
    \end{tikzpicture}}%
    \par\smallskip
    \ref{ferlegendA}
    \caption{Performance of selected UB codes compared with the GB codes A1, A5, and A6 \cite{Panteleev_2021}.}
    \label{fig:fer_ub_a}
    \vspace{-4.8mm}
\end{figure}
Table \ref{tab:poly-params} lists selected UB codes across different operating regimes, including short- and medium-length constructions, high-distance examples, and low-weight stabilizer configurations. Although the UB construction restricts the design space to a single-polynomial search, it still yields a broad range of competitive QLDPC codes with varying rates, distances, and block lengths. In particular, several UB codes with stabilizer weights $w=6$ and $w=8$ achieve parameters comparable to benchmark GB and BB constructions. The reported minimum distances are computed using the libraries in \cite{webster2026distance, codedistancepypi}.

Figures \ref{fig:fer_ub_bb} and \ref{fig:fer_ub_a} show the performance of selected UB codes under BP-OSD-0 decoding \cite{Panteleev_2021} over a binary symmetric channel with independent $X$-type errors. The simulations use serial normalized min-sum belief propagation with normalization factor $\alpha=0.875$ and maximum iteration count $I_{\max}=1000$. BP-OSD decoding is implemented using the libraries in \cite{Roffe_2020,Roffe_LDPC_Python_tools_2022}. For each physical error rate $p_x$, simulations are run until at least $150$ logical errors are collected.

Figure \ref{fig:fer_ub_bb} compares selected UB codes against representative BB codes \cite{Bravyi_2024} of comparable lengths and rates. All compared codes in this figure have stabilizer weight $w=6$. The results show that the proposed UB construction achieves competitive decoding performance despite being obtained from a substantially smaller single-polynomial search space. Moreover, in all three comparisons, the UB codes operate at slightly higher rates than their corresponding BB codes. In particular, the UB code $[[372,14,12]]$ achieves performance comparable to the BB code $[[360,12,\leq24]]$ while operating at a slightly higher rate.

Figure \ref{fig:fer_ub_a} compares UB codes with GB codes A1, A5, and A6 from \cite{Panteleev_2021}. The UB code $[[234,26,14]]$ consistently outperforms the A1 code $[[254,28,14\leq d\leq20]]$ across the simulated range while also using lower stabilizer weight ($w=8$ versus $w=10$). The UB code $[[1022,56,21]]$ also demonstrates better performance compared with the GB code A6 $[[900,50,15]]$ at almost the same rate. Overall, the results indicate that UB codes provide a favorable tradeoff between reduced search complexity, structural simplicity, and code performance, while remaining competitive or improving upon existing BB and GB constructions across various parameter regimes.

\section{Conclusion}
\label{sec:conclusion}
We introduced univariate bicycle (UB) codes, a structured subclass of
generalized bicycle (GB) codes with reduced design complexity and explicit
logical structure. Using this structure, we derived tractable upper bounds
on the minimum distance through low-weight logical representatives and
induced-cycle densities of associated circulant matrices. Numerical
results demonstrate competitive performance  despite the additional algebraic restriction. Several directions remain for future work, including tighter distance
analysis, extensions beyond the divisor setting, and improved decoding
methods tailored to UB codes. The explicit logical structure may also
enable systematic logical gate constructions and hardware-aware code
designs.

\bibliographystyle{IEEEtran}
{\footnotesize \bibliography{ref}}

@article{wang2022distance,
  title={Distance bounds for generalized bicycle codes},
  author={Wang, Renyu and Pryadko, Leonid P},
  journal={Symmetry},
  volume={14},
  number={7},
  pages={1348},
  year={2022},
  publisher={MDPI}
}

@article{gallager1962low,
  title={Low-density parity-check codes},
  author={Gallager, Robert},
  journal={IRE Transactions on information theory},
  volume={8},
  number={1},
  pages={21--28},
  year={1962},
  publisher={IEEE}
}

@article{7336474,
  title={Fifteen years of quantum {LDPC} coding and improved decoding strategies},
  author={Babar, Zunaira and Botsinis, Panagiotis and Alanis, Dimitrios and Ng, Soon Xin and Hanzo, Lajos},
  journal={iEEE Access},
  volume={3},
  pages={2492--2519},
  year={2015},
  publisher={IEEE}
}

@article{wang2024coprime,
  title={Coprime bivariate bicycle codes and their properties},
  author={Wang, Ming and Mueller, Frank},
  journal={arXiv e-prints},
  pages={arXiv--2408},
  year={2024}
}

@article{PhysRevA.88.012311,
  title={Quantum Kronecker sum-product low-density parity-check codes with finite rate},
  author={Kovalev, Alexey A and Pryadko, Leonid P},
  journal={Physical Review A—Atomic, Molecular, and Optical Physics},
  volume={88},
  number={1},
  pages={012311},
  year={2013},
  publisher={APS}
}

@article{Panteleev_2021,
   title={Degenerate Quantum {LDPC} Codes With Good Finite Length Performance},
   volume={5},
   ISSN={2521-327X},
   journal={Quantum},
   publisher={Verein zur Forderung des Open Access Publizierens in den Quantenwissenschaften},
   author={Panteleev, Pavel and Kalachev, Gleb},
   year={2021},
   month=nov, pages={585} }

@article{MacKay_2004,
  title={Sparse-graph codes for quantum error correction},
  author={MacKay, David JC and Mitchison, Graeme and McFadden, Paul L},
  journal={IEEE Transactions on Information Theory},
  volume={50},
  number={10},
  pages={2315--2330},
  year={2004},
  publisher={IEEE}
}

@article{Bravyi_2024,
  title={High-threshold and low-overhead fault-tolerant quantum memory},
  author={Bravyi, Sergey and Cross, Andrew W and Gambetta, Jay M and Maslov, Dmitri and Rall, Patrick and Yoder, Theodore J},
  journal={Nature},
  volume={627},
  number={8005},
  pages={778--782},
  year={2024},
  publisher={Nature Publishing Group UK London}
}

@article{postema2025existencecharacterisationbivariatebicycle,
  title={Existence and Characterisation of Bivariate Bicycle Codes},
  author={Postema, Jasper Johannes and Kokkelmans, Servaas JJMF},
  journal={arXiv preprint arXiv:2502.17052},
  year={2025}
}

@article{Roffe_2020,
  title={Decoding across the quantum low-density parity-check code landscape},
  author={Roffe, Joschka and White, David R and Burton, Simon and Campbell, Earl},
  journal={Physical Review Research},
  volume={2},
  number={4},
  pages={043423},
  year={2020},
  publisher={APS}
}

@article{Roffe_LDPC_Python_tools_2022,
  title={{LDPC}: Python tools for low density parity check codes},
  author={Roffe, Joschka},
  journal={PyPI https://pypi. org/project/ldpc},
  year={2022}
}

@article{Breuckmann_2021,
  title={Quantum low-density parity-check codes},
  author={Breuckmann, Nikolas P and Eberhardt, Jens Niklas},
  journal={PRX quantum},
  volume={2},
  number={4},
  pages={040101},
  year={2021},
  publisher={APS}
}

@article{vasic2025quantumlowdensityparitycheckcodes,
  title={Quantum Low-Density Parity-Check Codes},
  author={Vasic, Bane and Savin, Valentin and Pacenti, Michele and Borah, Shantom and Raveendran, Nithin},
  journal={arXiv preprint arXiv:2510.14090},
  year={2025}
}

@article{PhysRevA.54.1098,
  title = {Good quantum error-correcting codes exist},
  author = {Calderbank, A. R. and Shor, Peter W.},
  journal = {Phys. Rev. A},
  volume = {54},
  issue = {2},
  pages = {1098--1105},
  numpages = {0},
  year = {1996},
  month = {Aug},
  publisher = {American Physical Society},
  doi = {10.1103/PhysRevA.54.1098},
  url = {https://link.aps.org/doi/10.1103/PhysRevA.54.1098}
}

@article{PhysRevA.54.4741,
  title = {Simple quantum error-correcting codes},
  author = {Steane, A. M.},
  journal = {Phys. Rev. A},
  volume = {54},
  issue = {6},
  pages = {4741--4751},
  numpages = {0},
  year = {1996},
  month = {Dec},
  publisher = {American Physical Society},
  doi = {10.1103/PhysRevA.54.4741},
  url = {https://link.aps.org/doi/10.1103/PhysRevA.54.4741}
}

@ARTICLE{641542,
  author={Vardy, A.},
  journal={IEEE Transactions on Information Theory}, 
  title={The intractability of computing the minimum distance of a code}, 
  year={1997},
  volume={43},
  number={6},
  pages={1757-1766},
  doi={10.1109/18.641542}}

@ARTICLE{7891006,
  author={Dumer, Ilya and Kovalev, Alexey A. and Pryadko, Leonid P.},
  journal={IEEE Transactions on Information Theory}, 
  title={Distance Verification for Classical and Quantum LDPC Codes}, 
  year={2017},
  volume={63},
  number={7},
  pages={4675-4686},
  doi={10.1109/TIT.2017.2690381}}

@ARTICLE{10154053,
  author={Kapshikar, Upendra and Kundu, Srijita},
  journal={IEEE Transactions on Information Theory}, 
  title={On the Hardness of the Minimum Distance Problem of Quantum Codes}, 
  year={2023},
  volume={69},
  number={10},
  pages={6293-6302},
  doi={10.1109/TIT.2023.3286870}}

@misc{gottesman1997stabilizercodesquantumerror,
      title={Stabilizer Codes and Quantum Error Correction}, 
      author={Daniel Gottesman},
      year={1997},
      eprint={quant-ph/9705052},
      archivePrefix={arXiv},
      primaryClass={quant-ph},
      url={https://arxiv.org/abs/quant-ph/9705052}, 
}

@article{Wilde_2009,
   title={Logical operators of quantum codes},
   volume={79},
   ISSN={1094-1622},
   url={http://dx.doi.org/10.1103/PhysRevA.79.062322},
   DOI={10.1103/physreva.79.062322},
   number={6},
   journal={Physical Review A},
   publisher={American Physical Society (APS)},
   author={Wilde, Mark M.},
   year={2009},
   month=June }

@article{dastbasteh2025generalized,
  title={Generalized Bicycle Codes with Low Connectivity: Minimum Distance Bounds and Hook Errors},
  author={Dastbasteh, Reza and Larrarte, Olatz Sanz and Moncy, Arun John and Crespo, Pedro M and Martinez, Josu Etxezarreta and Otxoa, Ruben M},
  journal={arXiv preprint arXiv:2508.09082},
  year={2025}
}

@article{webster2026distance,
  title={Distance-Finding Algorithms for Quantum Codes and Circuits},
  author={Webster, Mark and Jacob, Abraham and Higgott, Oscar},
  journal={arXiv preprint arXiv:2603.22532},
  year={2026}
}

@misc{codedistancepypi,
  author       = {Michael Webster},
  title        = {{codeDistancePYPI}},
  howpublished = {\url{https://github.com/m-webster/codeDistancePYPI}},
  year         = {2025},
  note         = {Accessed: 2026-05-10}
}
\end{document}